\newtheorem{example}{Example}
\newtheorem{theorem}{Theorem}
\newtheorem{lemma}{Lemma}
\newcommand{\Var}{\text{\rm{Var}}}
\def\E{\mathbb{E}}
\def\var{\text{Var}}
\def\rhop{\rho_{\mathsf{Pearson}}}
\def\rhom{\rho_m}
\def\rhor{\rho_r}
\begin{document}
\title{On the Monotone Measure of Correlation}
\author{Omid Etesami and Amin Gohari
}
\markboth{ }%
{ }
\allowdisplaybreaks
\date{}
\maketitle

\begin{abstract}
Based on the notion of maximal correlation, Kimeldorf, May and Sampson (1980)
 introduce a measure of correlation between two random variables, called the ``concordant monotone correlation" (CMC). We revisit, generalize and prove new properties of this measure of correlation. It is shown that CMC captures various types of correlation detected in measures of rank correlation
like the Kendall tau correlation. 
We show that the CMC satisfies the data processing and tensorization properties 
(that make ordinary maximal correlation applicable to problems in information theory). Furthermore, CMC is shown to be intimately related to the FKG inequality. Furthermore, a combinatorical application of CMC is given for which we do not know of another method to derive its result. Finally, we study the problem of the complexity of the computation of the CMC, which is a non-convex optimization problem with local maximas. We give a simple but exponential-time algorithm that is guaranteed to output the exact value of the generalized CMC.
\end{abstract}

\section{Introduction}

To quantify the correlation between two random variables $X$ and $Y$, various measures have been introduced. Perhaps the simplest one is the  Pearson correlation coefficient
$$\rhop(X;Y)=\frac{\text{Cov}(X,Y)}{\sqrt{\var{(X)}\var{(Y)}}}$$ that measures
only the linear dependence between $X$ and $Y$. On the other hand, the Hirschfeld-Gebelein-R\'enyi  \emph{maximal correlation} $\rho_m(X,Y)$ is a measure of correlation that captures more general types of dependencies \cite{Hirschfeld, Gebelein, Renyi}. It is obtained by taking the maximum of the Pearson correlation coefficient of arbitrary functions of $X$ and $Y$:
\begin{equation*}\rhom(X;Y) :=\max_{f(\cdot), g(\cdot)}\frac{\text{Cov}(f(X),g(Y))}{\sqrt{\var{(f(X))}\var{(g(Y))}}}.\end{equation*}
Maximal correlation has found
interesting applications in information theory and statistics (e.g. \cite{KA12,KangUlukus}) thanks to its data processing and tensorization properties.

Kimeldorf, May and Sampson introduce a measure of correlation called the \emph{concordant monotone correlation} (CMC) defined as \cite{KimeldorfMaySampson}
\begin{equation}\label{rhor-equation} \rhor(X;Y) :=\max\frac{\text{Cov}(f(X),g(Y))}{\sqrt{\var{(f(X))}\var{(g(Y))}}},\end{equation}
where the maximum is over all \emph{monotonically non-decreasing} functions $f$ and $g$.
Not being aware of the previous works on dependent measures of correlations, we rediscovered this measure of correlation  in the shorter version of this paper in \cite{ourletter} and called it ``maximal rank correlation" due to a new property of this measure that we prove in this paper.\footnote{Thanks to Mr. Shahab Asoodeh for drawing our attention to the work of Kimeldorf, et al.} 

Clearly, CMC is at least equal to the Pearson coefficient and at most equal to the maximal correlation. 
We observe that one could have defined CMC using the equivalent formula
$ \rhor(X;Y) :=\max \E[f(X)g(Y)]$
where the maximum is over all montonically non-decreasing functions $f$ and $g$ such that $\E[f(X)] = \E[g(Y)] = 0$ and $\E[f^2(X)] = \E[g^2(Y)] = 1.$
Notice that $-1 \le \rhor(X;Y) \le 1$. Throughout this paper we assume discrete random variables $X$ and $Y$, even though CMC can be defined for continuous variables as well.

The following properties motivate CMC:

1. The correlation $\rhor(X,Y)$ is the same as $\rhor(f(X),g(Y))$ for all strictly increasing functions $f$ and $g$. Contrast this with the Pearson coefficient where $\rhop(X;Y)$ can be drastically different from $\rhop(\log X; \log Y)$, i.e. the Pearson correlation is different if we represent the points on the logarithmic scale. \color{black}
(This property is similar to a difference between the median and the expected value:
For median we have $f[\text{median}(X)]=\text{median}(f(X))$ for all strictly increasing functions $f$,
whereas $f(\E[X]) = \E[f(X)]$ only for linear functions $f$.)\color{black}

This property also explains why CMC is a 
\emph{rank} correlation.\footnote{In the case of continuous variables, it is a measure of correlation between two copulas.} Suppose we have a class of students who have taken two courses with two different professors. We would like to measure the correlation between the performance of students in the two classes, but the professors have different grading and exam practices. If we calculate the Pearson correlation (between the two grades of a random student), it is sensitive to the actual grading practices of the two professors.  It might be that the grades in one class have been normalized to be Gaussian with certain mean and variance (with a non-linear mapping), whereas in the other class they have not been normalized. The advantage of CMC is that it is not sensitive to this normalization, and only cares about the ranking of the students in the two courses.

2. Suppose $X$ and $Y$ have a continuous joint distribution,
but we do not know the distribution and have only access to sample points from the joint distribution.
If we look at the uniform distribution on the empirical sample points (instead of the real joint distribution), 
the maximal correlation is trivial; in fact, the maximal correlation is 1 since $x$ coordinates and $y$ coordinates
 of all sample points are distinct almost surely (in the empirical distribution, rv's $X$ and $Y$ will be functions of each other). 
However, the maximal rank correlation for the empirical distribution is non-trivial.  

3. It is shown in \cite{monotone-dep} that one can find a sequence $(U_n, V_n)$ whose joint distribution converges to that of $(U,V)$, while $\rho_m(U_n;V_n)=1$ for all $n$, but $\rho_m(U;V)=0$. On the other hand, from $\rho_r(U_n;V_n)=1$ for all $n$, one can conclude that $\rho_r(U;V)=1$.

4. For some random variables, like jointly Gaussian variables with positive correlation (and monotonic functions of such random variables),
CMC equals maximal correlation.
In other words, it detects all dependencies detectable by maximal  correlation.

\vspace{0.2cm}
We refer the author to \cite{Chhfty2, Chhfty, Schriever} and \cite[Ex. 4.2.3]{Schriever2} for study of the cases where $\rho_r(X;Y)=1$ or $\rho_r(X;Y)=\rho_m(X;Y)$. Computation of CMC has been studied in \cite{KimeldorfMaySampson} where it is shown that computing CMC reduces to a non-convex optimization problem with local maximas. Kimeldorf, May and Sampson had proposed an iterative algorithm for computing CMC, which may trap into local maximas. This problem is further discussed in \cite{Chhfty2, Chhfty, Sampson92}.

Our other contributions in this paper can be summarized as follows: 

(a) We define the generalization of CMC for partial orders (instead of total orders). Then, we provide a connection between CMC and the FKG inequality.

(b) As we will see, CMC is greater than or equal to previously known rank correlations such as the Spearman rank correlation \cite{Spearman} and Kendall's tau rank correlation \cite{Kendall}. While the fact that CMC is greater than or equal to the Spearman rank correlation \cite{Spearman} is known (e.g.~see \cite{Sampson92}, \cite[pp. 153-154]{book}, as far as we know its relation with Kendall's tau correlation is not known before. See \cite[Sec. I]{Farnoud} for some applications of rank correlation measures.
This property also explains why CMC may be called the \emph{maximal} rank correlation.

(c) Similar to maximal correlation, the CMC satisfies the data processing and tensorization properties. We use the tensorization property to prove property (b) above. We also show a combinatorical application in Example \ref{example3}.

(d) We give an algorithm for computing the CMC  (and its generalization).  Our algorithm is guaranteed to compute the exact value of the CMC (does not trap in local maximas).

\begin{figure*}
\begin{align}
\E[fg]&= \E_{X_1Y_1} \E_{X_2Y_2|X_1Y_1}[fg]\nonumber
\\&\leq \E_{X_1Y_1}\left[ \E_{X_2|X_1Y_1}[f]\cdot \E_{Y_2|X_1Y_1}[g] +\alpha\sqrt{\Var_{X_2|X_1Y_1}[f]\cdot \Var_{Y_2|X_1Y_1}[g]}   \,    \right] \label{(i)}\\
&= \E_{X_1Y_1}\Big[ \E_{X_2|X_1}[f]\cdot\E_{Y_2|Y_1}[g]\Big] +\alpha\E_{X_1Y_1}\Big[\sqrt{\Var_{X_2|X_1}[f]\cdot\Var_{Y_2|Y_1}[g]}\,\Big]  \nonumber     \\
&\leq  \E_{X_1}\E_{X_2|X_1}[f]\cdot\E_{Y_1}\E_{Y_2|Y_1}[g] + \alpha\sqrt{\Var_{X_1}\E_{X_2|X_1}[f]\cdot\Var_{Y_1}\E_{Y_2|Y_1}[g]} \nonumber
\\&\qquad+\alpha\E_{X_1Y_1}\Big[\sqrt{\Var_{X_2|X_1}[f]\cdot\Var_{Y_2|Y_1}[g]}\,\Big] \label{(ii)} \\
&\leq \E_{X_1}\E_{X_2|X_1}[f] \cdot \E_{Y_1}\E_{Y_2|Y_1}[g] + \alpha\sqrt{\Var_{X_1}\E_{X_2|X_1}[f]\cdot \Var_{Y_1}\E_{Y_2|Y_1}[g]}\nonumber
\\&\qquad
+\alpha\sqrt{\E_{X_1}\Var_{X_2|X_1}[f]\cdot \E_{Y_1}\Var_{Y_2|Y_1}[g]} \label{(iii)} \\
&\leq \E_{X_1X_2}[f]\cdot \E_{Y_1Y_2}[g]\nonumber \\&\qquad+\alpha      
\sqrt{\left(\Var_{X_1}\E_{X_2|X_1}[f]+\E_{X_1}\Var_{X_2|X_1}[f] \right) \left(\Var_{Y_1}\E_{Y_2|Y_1}[g]+ \E_{Y_1}\Var_{Y_2|Y_1}[g]\right)}          \label{(iv)}        \\
&=  \E_{X_1X_2}[f]\cdot \E_{Y_1Y_2}[g]+\alpha\sqrt{\Var_{X_1X_2}[f]\Var_{Y_1Y_2}[g]}.
\label{(v)}\\\nonumber
\\ 
\hline \nonumber
\end{align}
\vspace{-1.2cm}
\end{figure*}

\section{Properties of CMC}
\subsection{CMC for partial orders}
Consider our example of grading students in two courses. For each of the two professors, 
it may be easy to compare some students with each other but difficult to fairly give a full ranking of all students.
So we may be interested in finding the amount of consistency (correlation) between two \emph{partial} rankings provided by the two professors.
This motivates the definition below.

Let $(\mathcal{X}, \preceq_\mathcal{X})$ and $(\mathcal{Y}, \preceq_\mathcal{Y})$ be two partially ordered sets. We say that the real-valued function $f$ is monotonic on $\mathcal{X}$ if $f(x)\leq f(x')$ when $x \preceq_\mathcal{X} x'$. Take some random variables $X$ and $Y$ with some $p(x,y)$ over $\mathcal{X}\times \mathcal{Y}$. The (generalized) CMC $\rhor(X;Y)$ is defined as in Eqn.~\eqref{rhor-equation}, except that the maximum is taken over functions $f$ and $g$ monotonic with respect to  the partial orders $(\mathcal{X}, \preceq_\mathcal{X})$ and $(\mathcal{Y}, \preceq_\mathcal{Y})$. For example, when no two members of $\mathcal{X}$ and $\mathcal{Y}$ are comparable, we have $\rhor(X;Y)=\rhom(X;Y)$. On the other hand, when $\mathcal{X}$ and $\mathcal{Y}$ are subsets of the reals with the natural ordering of real numbers, we get the earlier definition of CMC.

\begin{example}\label{example1rh}
Suppose that $\mathcal{X} = \{0,1\}^n$ with the partial order $(x_1, \ldots, x_n) \preceq (x'_1, \ldots, x'_n)$ iff $x_i \le x'_i$ for $i = 1, \ldots, n$. Suppose that $\mathcal{Y} = \mathcal{X}$ but with the \emph{reverse} partial order.
Let $X = (X_1, \ldots, X_n) \in \mathcal{X}$ be a random variable such that $X_1, \ldots, X_n$ are independent. Let $Y = X$.
Then according to the FKG inequality \cite{FKG}, we have $\emph{Cov}(f, g) \le 0$ when $f$ is monotonic on $X$ and $g$ is montonic on $Y$. In other words, $\rhor(X, Y) \le 0$.
\end{example}

The FKG inequality is a  nontrivial inequality; for example it states that any two monotone properties have positive correlation 
on an  Erd\H{o}s-R\'{e}nyi random graph.
For further applications, see \cite[Chapter 6]{AS}. 
The above example shows that FKG inequality is intimately connected to CMC. The inequality $\rhor(X, Y) \le 0$ given in Example \ref{example1rh} can be alternatively obtained using the tensorization property of CMC given later. See also Example \ref{example3}.

\subsection{Data processing and tensorization property}

Maximal correlation satisfies the data processing and tensorization properties:
\begin{align*}
 \rhom&(X';Y')\leq \rhom(X;Y) ~\text{\ if \ }~ p(x',y'|x,y)=p(x'|x)p(y'|y),\\
\rhom&(X,X';Y,Y')= \max(\rhom(X;Y), \rhom(X';Y')) ~\text{\ if \ }~ p(x',y',x,y)=p(x',y')p(x,y).
\end{align*}

A trivial variant of the data processing inequality holds for CMC:
When $\mathcal{X}, \mathcal{Y}, \mathcal{X'},$ and $\mathcal{Y'}$ are partially ordered sets,
and 
 $f: \mathcal{X} \to \mathcal{X'}$ and $g: \mathcal{Y} \to \mathcal{Y'}$ are order-preserving functions,
then $\rhor(X;Y) \ge \rhor(f(X);g(Y))$. Equality holds if $f$ and $g$ are invertible increasing functions.

To state the variant of the tensorization property for CMC, 
recall that the \emph{product order} of two partially ordered sets $(\mathcal{X}, \preceq_\mathcal{X})$ and $(\mathcal{X}, \preceq_{X'})$
is the set $\mathcal{X} \times \mathcal{X'}$ with the partial order $(x_1,x'_1) \le (x_2, x'_2)$ iff $x_1 \preceq_\mathcal{X} x_2$ and $x'_1 \preceq_\mathcal{X'} x'_2$.

First, we observe in the following example that the straightforward form of the tensorization property is not correct.

\begin{example}\label{example2n} Let $\mathcal{X}=\{0,1\}$ with the order $0\prec_\mathcal{X} 1$, and $\mathcal{Y}=\{1,0\}$ with the order $1\prec_\mathcal{Y} 0$. Let $X$ take the uniform distribution on $\mathcal{X}$, and let $Y=X$. Then $f(x) = (-1)^{1-x}$ and $g(y)=(-1)^y$ are the only possible functions with zero expected value and variance one. Hence $\rhor(X;Y)=\E[f(X) g(Y)] = -1$. On the other hand, if  $(X_1, Y_1)$ and $(X_2, Y_2)$ are i.i.d.\ copies of $(X, Y)$, then $\rhor(X_1,X_2; Y_1,Y_2) \ge 0$ with respect to the product orders. To see this, observe that
$f(x_1, x_2)=(-1)^{1-x_1}, g(y_1, y_2) = (-1)^{y_2}$ have zero mean and variance one, but $\E[f g] = 0$.
\end{example}

To state the valid variant of the tensorization property,
let $\rhor^+(X;Y)=\max(\rhor(X;Y),0)$.
\begin{theorem}\label{thm:tensoriz}
Let $X_1$, $X_2$, $Y_1$ and $Y_2$ be random variables taking values in partially ordered sets $(\mathcal{X}_1, \preceq_{\mathcal{X}_1})$, $(\mathcal{X}_2,\preceq_{\mathcal{X}_2})$, $(\mathcal{Y}_1, \preceq_{\mathcal{Y}_1})$ and $(\mathcal{Y}_2, \preceq_{\mathcal{Y}_2})$ respectively.
Then
$$\rhor^+(X_1,X_2;Y_1,Y_2)= \max(\rhor^+(X_1;Y_1), \rhor^+(X_2;Y_2))$$  if $p(x_1,y_1,x_2,y_2)=p(x_1,y_1)p(x_2,y_2)$,
where the CMCs are computed with respect to the product orders $\mathcal{X}_1 \times \mathcal{X}_2$ and $\mathcal{Y}_1 \times \mathcal{Y}_2$.
\end{theorem}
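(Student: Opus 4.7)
The plan is to prove the two inequalities separately. For the lower bound $\rhor^+(X_1,X_2;Y_1,Y_2)\geq\max(\rhor^+(X_1;Y_1),\rhor^+(X_2;Y_2))$, use lifting: given monotone $f_1,g_1$ on $\mathcal{X}_1,\mathcal{Y}_1$ with $\E[f_1(X_1)]=\E[g_1(Y_1)]=0$, $\E[f_1(X_1)^2]=\E[g_1(Y_1)^2]=1$, and $\E[f_1(X_1)g_1(Y_1)]=\rhor(X_1;Y_1)$, the lifted functions $f(x_1,x_2):=f_1(x_1)$ and $g(y_1,y_2):=g_1(y_1)$ are monotone under the product orders. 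By the independence of $(X_1,Y_1)$ and $(X_2,Y_2)$, they retain mean zero and variance one, and $\E[fg]=\rhor(X_1;Y_1)$. Hence $\rhor(X_1,X_2;Y_1,Y_2)\geq\rhor(X_1;Y_1)$, and by symmetry $\rhor(X_1,X_2;Y_1,Y_2)\geq\max(\rhor(X_1;Y_1),\rhor(X_2;Y_2))$. Taking positive parts (which commutes with $\max$) yields the lower bound.

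For the upper bound, set $\alpha:=\max(\rhor^+(X_1;Y_1),\rhor^+(X_2;Y_2))\geq 0$ and take arbitrary monotone $f,g$ on the product spaces with $\E[f]=\E[g]=0$ and $\E[f^2]=\E[g^2]=1$; it suffices to show $\E[fg]\leq\alpha$, i.e.\ $\text{Cov}(f,g)\leq\alpha\sqrt{\Var(f)\Var(g)}$. This is precisely the content of the chain of inequalities displayed at the top of the page; the justification of each step is as follows. Step~(i): condition on $(X_1,Y_1)=(x_1,y_1)$; by independence the conditional law of $(X_2,Y_2)$ is simply $p(x_2,y_2)$, and the sections $x_2\mapsto f(x_1,x_2)$ and $y_2\mapsto g(y_1,y_2)$ are monotonic in their arguments, so the definition of CMC applied to $(X_2;Y_2)$, together with $\rhor(X_2;Y_2)\leq\alpha$, gives the bound. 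Step~(ii): the function $x_1\mapsto\E_{X_2|X_1=x_1}[f]=\E_{X_2}[f(x_1,X_2)]$ is monotone in $x_1$ (averaging preserves monotonicity, and $X_2$ is independent of $X_1$), and similarly for $g$; apply CMC for $(X_1;Y_1)$ with $\rhor(X_1;Y_1)\leq\alpha$. Step~(iii) is Cauchy--Schwarz, $\E[\sqrt{UV}]\leq\sqrt{\E[U]\E[V]}$, moving the expectation inside the square root. Step~(iv) uses $\sqrt{ac}+\sqrt{bd}\leq\sqrt{(a+b)(c+d)}$ to combine the two square-root terms. Step~(v) invokes the total variance identity $\Var(h)=\Var_{X_1}\E_{X_2|X_1}[h]+\E_{X_1}\Var_{X_2|X_1}[h]$ applied to both $f$ and $g$.

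The main subtlety, and the reason $\rhor^+$ (not $\rhor$) appears in the statement, is that $\alpha\geq 0$ is essential in steps~(iii) and~(iv): there we multiply a nonnegative sqrt-inequality by $\alpha$ and need the direction of the inequality to be preserved. Example~\ref{example2n} already shows that the raw quantity $\rhor$ fails to tensorize, so this clipping at zero is necessary rather than cosmetic.
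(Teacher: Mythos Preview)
Your proof is correct and follows essentially the same approach as the paper: the lower bound via lifting coordinate functions, and the upper bound via the displayed chain \eqref{(i)}--\eqref{(v)} with the same justifications (section monotonicity for step~\eqref{(i)}, monotonicity of conditional means for step~\eqref{(ii)}, Cauchy--Schwarz for steps~\eqref{(iii)} and~\eqref{(iv)}, and the law of total variance for step~\eqref{(v)}, with $\alpha\geq 0$ needed in \eqref{(iii)}--\eqref{(iv)}). Your remark that Example~\ref{example2n} shows the clipping at zero is genuinely necessary is also in line with the paper's discussion.
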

\begin{proof}
The proof follows Kumar's proof \cite{Kumar10b} of tensorization of maximal correlation. Clearly, $$\rhor^+(X_1,X_2;Y_1,Y_2)\geq  \max(\rhor^+(X_1;Y_1), \rhor^+(X_2;Y_2))$$ since any monotonic function of $x_1$ is also a monotonic function of $(x_1, x_2)$. For the other direction, let $\alpha=\max(\rhor^+(X_1;Y_1), \rhor^+(X_2;Y_2))$. It suffices to show that 
 for monotonic functions $f(x_1, x_2)$ and $g(y_1, y_2)$,
$$\E[fg]\leq \E_{X_1X_2}[f]\E_{Y_1Y_2}[g]+\alpha\sqrt{\Var_{X_1X_2}[f]\Var_{Y_1Y_2}[g]}.$$

For this we use the derivation given on the top of this page. Here in \eqref{(i)} we use the definition of CMC for the conditional distribution $p_{X_2Y_2|X_1=x_1,Y_1=y_1}$ for all $(x_1, y_1)$ 
(and use the fact that $x_2 \mapsto f(x_1, x_2)$ is monotonic for every fixed value of $x_1$), and then we  take average over all those inequalities  for all $(x_1, y_1)$ . In \eqref{(ii)} we use  the definition of CMC for distribution $p_{X_1Y_1}$ applied to functions $x_1\mapsto\E_{X_2|X_1=x_1}[f]$ and $y_1\mapsto\E_{Y_2|Y_1=y_1}[g]$. These functions are monotonic because $\E_{X_2|X_1=x_1}[f]=\sum_{x_2}p(x_2)f(x_1,x_2)$ is monotonic.
In \eqref{(iii)} and \eqref{(iv)} we use the Cauchy-Schwarz inequality, and in \eqref{(v)} we use the law of total variance. Notice that in steps \eqref{(iii)} and \eqref{(iv)}, we use the fact that $\alpha\geq 0$.
\end{proof}

\begin{example}\label{example3} Consider the following application: let $f$ and $g$ be two arbitrary balanced increasing boolean functions on $\mathbb{Z}_k^n=\{0,1,\dots, k-1\}^n$. Then $f(X^n)$ at uniformly random $X^n \in \mathbb{Z}_k^n$ is with at least some positive probability (independent of n)
different from the value of $g(Y^n)$ at $Y^n \in \mathbb{Z}_k^n$ 
when  $Y_i = (X_i + 1 \text{ mod } k)$ for all $i$.
We do not know of another method to derive this result e.g.\ using traditional correlation inequalities like the FKG inequality, as the FKG only allows for a \underline{single} partial order for both of the two increasing functions.\end{example}

\subsection{Relation to previous rank correlations}
In this section we compare CMC with two known measures of rank correlation, namely Kendall tau correlation \cite{Kendall} and Spearman correlation coefficient \cite{Spearman}. We begin by providing a general framework that can illustrate this result. Herein, we consider CMC for two real valued random variables $X$ and $Y$ (and not the generalized CMC for partial orders).

A rank correlation between $X$ and $Y$ attempts to capture the question that how much an increase in the value for $X$ is correlated with an increase in the value for $Y$. For instance, suppose we randomly draw $(X,Y)$ and get numbers $(x_1, y_1)=(5,9)$. Then if we draw $(X,Y)$ again and get $(x_2, y_2)=(7, 10)$, we can see that the value of $x_2$ is bigger than $x_1$, and similarly $y_2$ is greater than $y_1$. This is consistent with $X$ and $Y$ providing the same rankings. To define a rank correlation measure, having two values of $x_1$ and $x_2$, we need a measure that assigns a relative rank between $x_1$ and $x_2$. Let us use a function $f(x_1, x_2)$ to measure how much $x_1$ is larger  than $x_2$. We make the following assumptions about $f$:

\begin{enumerate}
\item$f(x_1,x_2)\leq f(x'_1, x_2)$ if $x_1<x'_1$, \emph{i.e.,} $f(x, x_2)$ is monotonically non-decreasing in $x$ for all $x_2$;
\item$f(x_1,x_2)\leq f(x_1, x'_2)$ if $x_2>x'_2$, \emph{i.e.,} $f(x_1, x)$ is monotonically non-increasing in $x$ for all $x_1$.
\end{enumerate}

Similarly $g(y_1, y_2)$ compares the rank of $y_1$ with $y_2$ and satisfies the above properties. Assuming that $(X_1, Y_1)$ and $(X_2, Y_2)$ are i.i.d.\ distributed according to a given $p(x,y)$, a measure of rank correlation between $X$ and $Y$ may be defined as $\rhop(f(X_1, X_2);g(Y_1, Y_2))$.

Observe that setting $f(x_1,x_2)=\mathsf{sign}(x_1-x_2)$ (with $\mathsf{sign}(0)=0$) and $g(y_1,y_2) = \mathsf{sign}(y_1 - y_2)$ gives us the Kendall tau correlation. Setting $f(x_1,x_2)=\Phi_X^{-1}(x_1)-\Phi_X^{-1}(x_2)$ and $g(y_1, y_2) = \Phi_Y^{-1}(y_1)-\Phi_Y^{-1}(y_2)$ gives us the Spearman correlation coefficient, if $\Phi_X(x)$ is the CDF of $X$. Therefore, we get known measures of rank correlation with specific choices for $f$ and $g$. The following theorem says that the CMC (if non-negative) is greater than or equal to Kendall tau and Spearman correlation coefficients.
\begin{theorem}
Let $(X_1, Y_1)$ and $(X_2, Y_2)$ be two i.i.d.\ repetitions of the pair of random variables $(X,Y)$.
Then
$$\rhor^{+}(X;Y) = \max(0, \max_{f, g} \rhop(f(X_1, X_2);g(Y_1, Y_2)) )$$
where the maximum is over 
\emph{all} non-zero functions $f$ and $g$ satisfying properties 1 and 2 above.
\end{theorem}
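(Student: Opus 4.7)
The plan is to reinterpret $f(x_1,x_2)$ and $g(y_1,y_2)$ as monotonic functions on suitable \emph{product} partially ordered sets and then invoke the tensorization property (Theorem~\ref{thm:tensoriz}). Concretely, I would endow $\mathcal{X}\times\mathcal{X}$ with the product order obtained by combining the original order $\preceq_{\mathcal{X}}$ on the first coordinate with the \emph{reverse} order on the second coordinate, and analogously for $\mathcal{Y}\times\mathcal{Y}$. With this choice, conditions 1 and 2 for $f$ (and the corresponding ones for $g$) are exactly equivalent to $f$ and $g$ being monotonic with respect to these product orders. Hence, by the very definition of the generalized CMC applied to the random variables $(X_1,X_2)$ and $(Y_1,Y_2)$,
$$\rhor^{+}(X_1,X_2;Y_1,Y_2)\;=\;\max\!\Bigl(0,\;\max_{f,g}\rhop\bigl(f(X_1,X_2);\,g(Y_1,Y_2)\bigr)\Bigr),$$
where the inner maximum runs over non-constant $f,g$ satisfying 1 and 2.

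Next, since $(X_1,Y_1)$ and $(X_2,Y_2)$ are i.i.d.\ copies of $(X,Y)$, the independence hypothesis of Theorem~\ref{thm:tensoriz} holds, giving
$$\rhor^{+}(X_1,X_2;Y_1,Y_2)\;=\;\max\bigl(\rhor^{+}(X_1;Y_1),\,\rhor^{+}(X_2;Y_2)\bigr).$$
The first term on the right is $\rhor^{+}(X;Y)$ since $(X_1,Y_1)\sim p(x,y)$ with the original orders. For the second term, $X_2$ carries the reverse order on $\mathcal{X}$ and $Y_2$ carries the reverse order on $\mathcal{Y}$, so monotonic functions of $X_2$ (resp.\ $Y_2$) are exactly non-increasing functions on $\mathcal{X}$ (resp.\ $\mathcal{Y}$); but $\rhop(f,g)=\rhop(-f,-g)$, so negating both $f$ and $g$ identifies this supremum with $\rhor^{+}(X;Y)$. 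Thus $\rhor^{+}(X_1,X_2;Y_1,Y_2)=\rhor^{+}(X;Y)$, and combining with the displayed identity above yields the theorem.

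The main obstacle is conceptual rather than technical: one has to notice that the asymmetric conditions 1 and 2 (non-decreasing in the first slot, non-increasing in the second) can be repackaged as ordinary monotonicity with respect to a single product order by flipping the order on the second copy of $\mathcal{X}$ (and $\mathcal{Y}$). Once this repackaging is in hand, the rest is a clean application of tensorization plus the elementary observation that Pearson correlation is invariant under simultaneous negation of both arguments.
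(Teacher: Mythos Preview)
Your proposal is correct and follows essentially the same route as the paper: reverse the order on the second copy of $\mathcal{X}$ and of $\mathcal{Y}$ so that conditions 1 and 2 become monotonicity with respect to the product order, apply the tensorization theorem to the i.i.d.\ pair, and then use $\rhop(f,g)=\rhop(-f,-g)$ to identify $\rhor^{+}(X_2;Y_2)$ under the reversed orders with $\rhor^{+}(X;Y)$. The paper's proof is exactly this argument.
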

\begin{proof}
Let us define the order on $\mathcal{X}_1$ and $\mathcal{Y}_1$ to be the natural order on reals, but the order on $\mathcal{X}_2$ and $\mathcal{Y}_2$ be reversed. Then, the two properties of $f$ and $g$ will be the monotonicity properties for the product sets $\mathcal{X}_1\times \mathcal{X}_2$ and $\mathcal{Y}_1\times \mathcal{Y}_2$. By Theorem~\ref{thm:tensoriz}, we get that the maximum over \emph{all} valid functions $f$ and $g$ satisfying the two properties, if non-negative, is equal to $\max(\rhor^+(X_1;Y_1), \rhor^+(X_2; Y_2))$. However, $\rhor^+(X_1;Y_1)=\rhor^+(X_2; Y_2)$ since reversing the order does not change the CMC: a function $f(x)$ is monotonic with respect to an order, if and only if $-f(x)$ is monotonic with respect to the reverse order; similarly for $g(y)$. And the Pearson correlation coefficient of $f(X)$ and $g(Y)$ is the same as that of $-f(X)$ and $-g(Y)$.
This completes the proof.
\end{proof}

\section{Computation of CMC}
Computation of maximal correlation and the optimizers $f(\cdot)$ and $g(\cdot)$ are easy using the connection between this problem and the second singular value of a certain matrix \cite{Witsenhausen75}.
However, for computing CMC we do not yet know of polynomial time algorithms when $\mathcal{X}$ and $\mathcal{Y}$ are big.
Nevertheless, we show that CMC is computable
by giving an exponential time algorithm for it.
The computational complexity of CMC is an interesting open problem.

\begin{theorem}
Suppose $(\mathcal{X}, \preceq_\mathcal{X})$ and $(\mathcal{Y}, \preceq_\mathcal{Y})$ are two given finite partially ordered sets.
Furthermore, suppose a joint pmf on $\mathcal{X} \times \mathcal{Y}$ is given.
We can compute $\rho_r(X;Y)$ in time polynomial in $|\mathcal{X} \times \mathcal{Y}|$ but exponential in the total number of 
 inequality relations in the two partially ordered sets. 
\end{theorem}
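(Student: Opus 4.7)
The plan is to reduce the CMC computation, a non-convex optimization with linear inequality constraints, to a family of SVD problems indexed by the subsets of tight monotonicity constraints. First, since the feasible set (monotonic $f, g$ of zero mean and unit variance) is compact, the maximum defining $\rhor(X;Y)$ is attained at some $(f^*, g^*)$. At $(f^*, g^*)$, each of the $R$ monotonicity inequalities is either tight or strict; let $S^* \subseteq R$ denote the tight ones. I would enumerate all $2^R$ subsets $S \subseteq R$ as candidates for the active set.

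For each $S$, define $V_S$ to be the linear subspace of pairs $(f, g)$ satisfying $\E[f]=\E[g]=0$ and, for every inequality $x \preceq x'$ (or $y \preceq y'$) in $S$, the equality $f(x) = f(x')$ (or $g(y) = g(y')$). On $V_S$, I would solve the \emph{relaxed} problem of maximizing $\E[f(X) g(Y)]$ subject only to $\E[f^2]=\E[g^2]=1$, ignoring the inequalities in $R \setminus S$ entirely. By standard Lagrange multiplier analysis, the stationary points of this relaxed problem are precisely the singular vector pairs of the bilinear form $(f, g) \mapsto \E[f(X) g(Y)]$ restricted to $V_S$; equivalently, they are the singular vector pairs of the matrix $M_{xy} = p(x,y)/\sqrt{p(x)p(y)}$ projected onto $V_S$. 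A single full SVD of the restricted matrix thus enumerates all stationary pairs in time polynomial in $|\mathcal{X} \times \mathcal{Y}|$. For each such pair $(\hat f, \hat g)$, I would check whether $\hat f$ is monotonic on $\mathcal{X}$ and $\hat g$ is monotonic on $\mathcal{Y}$ (i.e., whether the constraints in $R \setminus S$ are satisfied); if so, the pair yields a valid CMC candidate with value $\E[\hat f(X) \hat g(Y)]$. The algorithm outputs the maximum over all $S$ and all feasible candidates.

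For correctness, the optimizer $(f^*, g^*)$ is a local maximum of the Pearson correlation on $V_{S^*}$ (the inequalities in $R \setminus S^*$ are strict and so do not bind locally), hence a stationary point of the relaxed problem on $V_{S^*}$, hence one of the singular vector pairs produced during the $S = S^*$ iteration. It passes the feasibility check by construction, so the algorithm's output is at least $\rhor(X;Y)$. Conversely, every feasible candidate corresponds to a monotonic $(f, g)$ of unit variance and zero mean, so its value does not exceed $\rhor(X;Y)$. The main subtlety --- and the reason to enumerate all singular vectors on $V_S$ rather than only the top one --- is that the top singular vector on $V_{S^*}$ may violate strict monotonicity in $R \setminus S^*$, while the true CMC optimum could be a lower-indexed singular vector that happens to be feasible. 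Total running time is $2^R \cdot \mathrm{poly}(|\mathcal{X} \times \mathcal{Y}|)$, as required.
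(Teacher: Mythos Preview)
Your active-set strategy---enumerate all subsets $S$ of the order relations, solve the equality-constrained (merged-alphabet) maximal-correlation problem on each $V_S$, and keep feasible candidates---is exactly the algorithm the paper uses. The divergence is in the correctness proof, and there your argument has a genuine gap.

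You argue that $(f^*,g^*)$ is a stationary point of the relaxed bilinear problem on $V_{S^*}$, ``hence one of the singular vector pairs produced during the $S=S^*$ iteration.'' But a full SVD returns an orthonormal \emph{basis} of singular pairs, not all stationary points. If the singular value $\rho_r$ has multiplicity $k>1$ on $V_{S^*}$, the stationary pairs with that value form a $(k-1)$-sphere, and the particular $(f^*,g^*)$ need not be among the basis vectors your SVD outputs; none of those basis vectors is forced to be monotone. So the step ``stationary $\Rightarrow$ appears in the SVD output'' fails exactly in the degenerate case, and enumerating all indices (your proposed remedy for the ``main subtlety'') does not repair it.

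The paper closes this gap differently. It first picks, among all CMC optimizers, one with the \emph{maximum} number of tight order relations, and then proves (Lemma~\ref{lemma1a}) that from any feasible $(f_0,g_0)$ there is a continuous path inside the unit-variance set to either $(\tilde f,\tilde g)$ or $(-\tilde f,-\tilde g)$---for \emph{any} maximizer $(\tilde f,\tilde g)$ of the relaxed problem---along which the covariance is non-decreasing. If some maximizer failed monotonicity (in both signs), the path would cross a new equality before leaving the monotone region, producing a CMC optimizer with strictly more tight relations, a contradiction. Thus for this particular $S^*$ \emph{every} top singular pair is monotone up to sign, so it suffices to compute only the top singular value, and the multiplicity issue never arises. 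Your KKT shortcut tries to avoid this path lemma, but without it the argument is incomplete.
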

In the above theorem, when we count the total number of inequality relations, 
we do not count the trivial inequalities $x \preceq x'$ where $x = x'$; that is, we only count strict inequalities.

\begin{proof}
Let $R_\mathcal{X}$ be the set all pairs $(x, x')$ such that $x \ne x'$ but $x \preceq_\mathcal{X} x'$; similarly define $R_\mathcal{Y}$.
Algorithm~\ref{alg} computes $\rho_r(X;Y)$ as desired.
The maximization of $\text{Cov}(f,g)$ in the \textbf{for} loop
can be done in polynomial time, since 
it is the problem of computing ordinary maximal correlation
         where $x$ and $x'$ are merged together for $(x, x') \in S_\mathcal{X}$
and 
$y$ and $y'$ are similarly merged when $(y, y') \in S_\mathcal{Y}$. When two symbols $x$ and $x'$ are merged, the resulting merged symbol has a probability of occurance which is the sum of the probabilities assigned to $x$ and $x'$.

\begin{algorithm}\label{alg}
$\rho_r \leftarrow -\infty $\;
let $R_\mathcal{X}$ be the set all pairs $(x, x')$ such that
$x \ne x'$ but $x \preceq_\mathcal{X} x'$\;
similarly define $R_\mathcal{Y}$\;
\For{
all $S_\mathcal{X} \subseteq R_\mathcal{X}$ and 
$S_\mathcal{Y} \subseteq R_\mathcal{Y}$
}{
   maximize $\text{Cov}(f,g)$ subject to \\ 
   \hspace{2mm} $\Var[f] = \Var[g] = 1$,
   $f(x) = f(x')$ for $(x, x') \in S_\mathcal{X}$ \\
   \hspace{2mm} and 
   $g(y) = g(y')$ for $(y, y') \in S_\mathcal{Y}$\;
   let $\tilde f$ and $\tilde g$ maximize $\text{Cov}(f,g)$\;
   \If{either of $(\tilde f, \tilde g)$ or $(-\tilde f, -\tilde g)$ are monotone
  \\ \hspace{2mm} with respect to $\preceq_\mathcal{X}$
           and $\preceq_\mathcal{Y}$}{
		 $\rho_r \leftarrow \max(\rho_r, \text{Cov}(\tilde f, \tilde g))$}
}
 \caption{Computing generalized CMC}
\end{algorithm}

To prove that the algorithm works, first observe that $\rho_r$ in the algorithm is updated only when monotone functions are found. Therefore, the output of the algorithm is less than or equal to the real CMC. To show the other direction, let $f^*$ and $g^*$ be the optimum monotone functions through which the CMC is obtained. Furthermore, among all such pairs $f^*$ and $g^*$, choose one that has the maximum number of equalities $f^*(x)=f^*(x')$ or $g^*(y)=g^*(y')$ for the comparable pairs $(x, x')$ and $(y, y')$. Then, let us consider the following relaxation of the problem: we drop the relation $x \preceq_\mathcal{X} x'$ if $f^*(x)<f^*(x')$ and similarly for $g^*(\cdot)$. 
We claim that given any maximizer $\tilde f$ and $\tilde g$
of the relaxed problem, either  $(\tilde f, \tilde g)$ or $(-\tilde f, -\tilde g)$ are monotone with respect to $\preceq_\mathcal{X}$
           and $\preceq_\mathcal{Y}$. This claim proves the correctness of the algorithm.

This claim follows from the following statement on maximal correlation. 
\begin{lemma} \label{lemma1a} Let $p(x,y)$ be an arbitrary distribution. Let
$\Gamma=\{(f,g): \Var[f(X)] = \Var[g(Y)] = 1\},$
and
$\Gamma^*=\{(f,g)\in \Gamma: f(x), g(y) \text{ maximize }\emph{Cov}(f, g)\}.$
Then for any arbitrary $(f_0, g_0)\in\Gamma$ and $(\tilde f, \tilde g)\in\Gamma^*$,
 there exists a path $(f_\alpha, g_\alpha)\in\Gamma$ for $\alpha\in[0,1]$ with the following properties:\begin{itemize}
\item $f_\alpha(x)$ and $g_\alpha(y)$ are continuous in $\alpha$ for all $x,y$, 
\item $(f_1, g_1)$ is either equal to $(\tilde f, \tilde g)$ or $(-\tilde f, -\tilde g)$,
\item $\emph{Cov}(f_\alpha, g_\alpha)$ is non-decreasing in $\alpha$.
\end{itemize}
\end{lemma}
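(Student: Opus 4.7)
The plan is to pass to the SVD of the joint distribution and construct the path explicitly. Without loss of generality center $f$ and $g$ (covariance is shift-invariant and $\Gamma$ is preserved). With $\hat f(x) = \sqrt{p(x)}f(x)$ and $\hat g(y) = \sqrt{p(y)}g(y)$, the set $\Gamma$ becomes a product of two unit spheres inside the orthogonal complements of $\sqrt{p_X}, \sqrt{p_Y}$, and $\text{Cov}(f,g) = \hat f^{T} M \hat g$ where $M_{xy} = p(x,y)/\sqrt{p(x)p(y)}$. Diagonalising $M$ on this restricted subspace via its SVD $M = \sum_i \sigma_i u_i v_i^T$ (with $\sigma_1 \ge \sigma_2 \ge \cdots \ge 0$) and expanding $\hat f = \sum a_i u_i$, $\hat g = \sum b_i v_i$, the problem reduces to: on $S^{n-1} \times S^{m-1}$ with bilinear form $B(a,b) = \sum_i \sigma_i a_i b_i$, connect any $(a_0, b_0)$ to the given maximizer $(\tilde a, \tilde b)$ or its negation by a continuous path along which $B$ is non-decreasing. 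An orthogonal change of basis inside the top singular subspace further lets us assume $(\tilde a, \tilde b) = (e_1, e_1)$.

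Decompose $a_0 = \cos\theta_0\, e_1 + \sin\theta_0\, v_a$ with $v_a \perp e_1$, $\|v_a\| = 1$, $\theta_0 \in [0,\pi]$, and similarly $b_0 = \cos\phi_0\, e_1 + \sin\phi_0\, v_b$. I propose the family
\[
a(t) = \cos\theta(t)\, e_1 + \sin\theta(t)\, v_a, \qquad b(t) = \cos\phi(t)\, e_1 + \sin\phi(t)\, v_b,
\]
with the angles taken \emph{linearly} toward the nearer antipodal target: if $\theta_0 + \phi_0 \le \pi$, set $\theta(t) = (1-t)\theta_0$, $\phi(t) = (1-t)\phi_0$, landing at $(\tilde a, \tilde b)$; otherwise set $\theta(t) = \pi - (1-t)(\pi - \theta_0)$ and similarly for $\phi(t)$, landing at $(-\tilde a, -\tilde b)$. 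The path stays on the spheres and is continuous in $t$.

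The key computation is monotonicity of $B(a(t), b(t))$. Put $\beta := \sum_{i\ge 2}\sigma_i (v_a)_i (v_b)_i$; Cauchy-Schwarz yields $|\beta| \le \sigma_2 \le \sigma_1$. Direct expansion plus the product-to-sum identities gives, in the first case (the second case is identical after the substitution $\theta_0 \mapsto \pi - \theta_0$, $\phi_0 \mapsto \pi - \phi_0$, which leaves the formula invariant),
\[
B(a(t), b(t)) = \tfrac{\sigma_1 + \beta}{2}\cos\bigl((1-t)(\theta_0 - \phi_0)\bigr) + \tfrac{\sigma_1 - \beta}{2}\cos\bigl((1-t)(\theta_0 + \phi_0)\bigr).
\]
Differentiating in $t$ produces two terms of the form $\tfrac{\sigma_1\pm\beta}{2}\cdot(\theta_0\pm\phi_0)\sin\bigl((1-t)(\theta_0\pm\phi_0)\bigr)$. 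The coefficients $\sigma_1\pm\beta$ are non-negative, and each sinusoidal factor is non-negative by the elementary inequality $x\sin x \ge 0$ for $|x| \le \pi$, which applies because $|\theta_0-\phi_0| \le \pi$ always holds and $\theta_0+\phi_0 \le \pi$ by our choice of target. Hence $B$ is non-decreasing on $[0,1]$.

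The main obstacle I anticipate is the degenerate regime where $\sigma_1$ has multiplicity $>1$ (so $\beta$ may saturate to $\pm\sigma_1$) or where $a_0$ or $b_0$ lies on the $\pm e_1$ axis so that $v_a$ or $v_b$ is not canonically defined. Both are harmless: the monotonicity argument uses only the non-strict bound $|\beta| \le \sigma_1$, and when $\sin\theta_0 = 0$ the factor $\sin\theta(t)$ vanishes identically along the path so any choice of $v_a$ works. The dichotomy $\theta_0+\phi_0 \lessgtr \pi$ selects which antipodal target to use, which is precisely the $\pm$ ambiguity in the statement of the lemma.
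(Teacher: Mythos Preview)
Your argument is correct and rests on the same two ingredients as the paper's proof: the SVD reduction of $\operatorname{Cov}(f,g)$ to a diagonal bilinear form $B(a,b)=\sum_i\sigma_i a_ib_i$ on a product of spheres, and the product-to-sum identity
\[
\sigma_1\cos\theta\cos\phi+\beta\sin\theta\sin\phi=\tfrac{\sigma_1+\beta}{2}\cos(\theta-\phi)+\tfrac{\sigma_1-\beta}{2}\cos(\theta+\phi),
\]
together with the bound $|\beta|\le\sigma_1$, to show that a straight-line motion of the angles is monotone in $B$. The paper packages the second step as a separate Lemma~2 and proves it by induction on the number of coordinates, rotating two coordinates at a time and concatenating the resulting paths; it then handles the component of $v_{f_0}$ orthogonal to all singular directions in a second phase where that residual is scaled down while the top coefficient is pushed to~$1$. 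You instead absorb everything orthogonal to $e_1$ into a single unit vector $v_a$ (with the null-space of $M$ simply contributing $\sigma_i=0$ terms to $\beta$) and run the two-angle computation once. This collapses the paper's induction and its separate residual-shrinking phase into one great-circle arc, which is cleaner; the paper's version, on the other hand, makes the role of each singular value more explicit. The degenerate cases you flag (multiplicity of $\sigma_1$, $\sin\theta_0=0$) are indeed harmless, and the sign dichotomy $\theta_0+\phi_0\lessgtr\pi$ exactly matches the $(\tilde f,\tilde g)$ versus $(-\tilde f,-\tilde g)$ alternative in the statement.
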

We use the lemma in the following form. 
Let $(f_0, g_0)$ be the reduced form of $(f^*, g^*)$, i.e.\ 
$f_0$ and $g_0$ are the same as $f^*$ and $g^*$ except that they are 
functions of the merged symbols instead of the original symbols.
Similarly consider the reduced form of $(\tilde f, \tilde g)$. By definition, $(f_0, g_0)$ are monotone and are in $\Gamma$ for the reduced form.
Similarly, $(f_1, g_1)$ is in $\Gamma^*$.
Suppose that one of $f_1$ or $g_1$ is not monotone.
Consider the continuous path that the lemma ensures exists.
Let $\alpha^*$ be the supremum $\alpha$ such that 
$(f_\alpha, g_\alpha)$ are monotone.
Then there exists $(x,x')$ in the reduced form such that 
$x \prec x'$ but
$f_{\alpha^*}(x) = f_{\alpha^*}(x')$.
Thus,
$(f_{\alpha^*}, g_{\alpha^*})$ satisfies more equalities than $(f_0, g_0)$ (i.e.\ the reduced form of $(f^*, g^*)$),
while $\text{Cov}(f_{\alpha^*}, g_{\alpha^*}) \ge 
\text{Cov}(f_0, g_0)$ which is a contradiction.
\end{proof}
\begin{proof}[Proof of Lemma~\ref{lemma1a}] Take arbitrary zero mean functions $f$ and $g$ satisfying $\mathbb{E}[f^2]=\mathbb{E}[g^2]=1$. As shown in  \cite{Witsenhausen75}, we can write
 \begin{align*}\mathbb{E}[fg]&=\sum_{x,y}p(x,y)f(x)g(y)
\\&=\sum_{x,y}\frac{ p(x,y)}{\sqrt{p(x)p(y)}}\cdot\sqrt{p(x)}f(x)\cdot \sqrt{p(y)}g(y)
\\&=v_f^T \tilde P v_g\end{align*}
where $\tilde P$ is a $|\mathcal{X}| \times |\mathcal{Y}|$ matrix whose rows are indexed by $\mathcal{X}$ and whose columns are indexed by $\mathcal{Y}$; this matrix is defined by $\tilde{P}_{xy} = p(x,y)/\sqrt{p(x)p(y)}$. The vector $v_f$ is a real column vector of size $|\mathcal{X}|$, indexed by $\mathcal{X}$ whose $x$ entry is equal to $\sqrt{p(x)}f(x)$; $v_g$ is defined similarly. The vectors  $v_f$ and $v_g$ are of norm one since 
$$v_{f}^T v_f=\sum_{x}\sqrt{p(x)}f(x)\cdot\sqrt{p(x)}f(x)=\sum_x p(x)f(x)^2=\mathbb{E}[f^2]=1,$$
and similarly for $v_g$.

Let  $\tilde{P}=\sum_{i=1}^t\lambda_iu_iw_j^T$ be the singular value decomposition of $\tilde{P}$, where $ \lambda_1\geq \lambda_2\geq ... \geq \lambda_t > 0$, and $\{u_1, \ldots, u_t\}$ 
and $\{w_1, \ldots, w_t\}$ are 
two orthonormal sets of vectors. Then it is known that the maximum singular value  $\lambda_1=1$ and the corresponding vectors  $u_1$ and $w_1$ have entries  of the form $\sqrt{p(x)}$ and $\sqrt{p(y)}$ respectively  \cite{KangUlukus, Witsenhausen75}. The constraint that $\mathbb{E}[f]=\mathbb{E}[g]=0$, implies that $v_f$ is perpendicular to $u_1$ and $v_g$ is perpendicular to $w_1$ since for instance
$$v_{f}^T u_1=\sum_{x}\sqrt{p(x)}f(x)\cdot\sqrt{p(x)}=\sum_x p(x)f(x)=\mathbb{E}[f]=0.$$
Thus, we are interested in the maximum of $v_f^T \tilde P v_g$ over all unit vectors $v_f$ and $v_g$ that are  perpendicular to vectors $u_1$ and $v_1$ (corresponding to the maximum singular value).  It is shown in \cite{Witsenhausen75}, 
 that the solution to this problem, $\rhom(X;Y)$, is the second maximum singular value $\lambda_2$, and the vectors $v_f$ and $v_g$ that obtain it, will be the vectors $u_2$ and $w_2$, associated to the second maximum singular value.

Now, let us turn to the proof of the lemma. Since adding a constant to $f$ and $g$ does not affect $\text{Cov}(f,g)$, without loss of generality, we may assume that $\mathbb{E}[f_0]=\mathbb{E}[g_0]=\mathbb{E}[f_1]=\mathbb{E}[g_1]=0$, as one can adjust the expected values in a continuous way. Furthermore, without loss of generality, we may assume that $ u_2$ and $w_2$ (corresponding to the second singular value) are the vectors associated to maximizer functions $f_1$ and $g_1$ respectively, \emph{i.e.}, $ v_{f_1}=u_2$ and $ v_{g_1}=w_2$.

Let $v_{f_0}, v_{g_0}$ be the corresponding vectors associated to functions $f_0$ and $g_0$ respectively. Observe that 
 \begin{align*}\mathbb{E}[f_0g_0]&=v_{f_0}^T \tilde P v_{g_0}
\\&=v_{f_0}^T\big(\sum_{i=1}^t\lambda_iu_iw_j^T\big) v_{g_0}
\\&=\sum_{i=1}^t\lambda_ic_i(0)d_i(0),\end{align*}
where $c_i(0)=v_{f_0}^T u_i$ and $d_i(0)=w_j^T v_{g_0}$. We have $c_1(0)=d_1(0)=0$ since $\mathbb{E}[f_0]=\mathbb{E}[g_0]=0$. Furthermore $\sum_{i}c_i(0)^2\leq \|v_{f_0}\|^2=1$, and similarly  $\sum_{i}d_i(0)^2\leq 1$ since $u_i$ and $w_i$ are orthonormal. Let $v_{f_0}=\sum_{i=1}^t c_i(0)u_i + \Delta v_{f_0}$ and $v_{g_0}=\sum_{i=1}^t d_i(0)w_i + \Delta v_{g_0}$.

To define the functions $f_{\alpha}$ and $g_{\alpha}$, we consider functions whose corresponding vectors are of the form $\sum_{i=1}^t c_i(\alpha)u_i+ \Delta v_{f_0}$ and $\sum_{i=1}^t d_i(\alpha)w_i+ \Delta v_{g_0}$ respectively. The goal is to move from the  tuples $(c_2(0), ..., c_t(0))$ and $(d_2(0), ..., d_t(0))$ towards the tuples $$(s\sqrt{c_2(0)^2+...+c_t(0)^2}, 0,...,0)$$ and $$(s\sqrt{d_2(0)^2+...+d_t(0)^2},0,...,0)$$ for $s=1$ or $s=-1$, in a continuous way, such that $\sum_{i=2}^t\lambda_ic_id_i$ is monotonically non-decreasing on the continious path. Lemma \ref{lemma2} shows that this is possible. Thus, we have moved on a continuous path, while increasing $\mathbb{E}[fg]$, towards functions whose corresponding vectors are of the form a constant times $u_2$ plus $\Delta v_{f_0}$, and a constant times $v_2$ plus $\Delta v_{g_0}$ respectively.

Assume $s=1$ (the case of $s=-1$ is similar). Since $\sum_{i=2}^tc_i(0)^2\leq 1$, and  $\sum_{i=2}^td_i(0)^2\leq 1$, we can continiously increase the coefficient of $u_2$ and $v_2$ to one, while reducing the norm of $\Delta v_{f_0}$ and  $\Delta v_{g_0}$ appropriately to keep the norm of the functions be one (we reduce the norm of $\Delta v_{f_0}$ and  $\Delta v_{g_0}$ by multiplying them in a constant less than one). Since $\lambda_2$ is non-negative, increasing the coefficients of $u_2$ and $v_2$ will not decrease $\mathbb{E}[fg]$. Since the residual vectors $ \Delta v_{f_0}$ and $ \Delta v_{g_0}$ are perpendicular to the $t$ vectors corresponding to the first $t$ singular values, reducing their norm by multiplying them by a constant does not affect the expected value $\mathbb{E}[fg]$. Thus, we can reach functions whose vectors correspond to $u_2=v_{f_1}$ and $w_2=v_{g_1}$ in a continuous way, while making sure that $\mathbb{E}[fg]$ is non-decreasing on this path.
\end{proof}

\begin{lemma}\label{lemma2}
Suppose we are given two non-zero vectors $(c_2, ..., c_t)$ and $(d_2, ..., d_t)$ and non-negative numbers $ \lambda_2\geq ... \geq \lambda_t$. Then it is possible  to continuously change the coordinates of these two vectors to reach $(s\sqrt{c_2^2+...+c_t^2}, 0, 0, ..., 0)$ and $(s\sqrt{d_2^2+...+d_t^2}, 0, 0, ..., 0)$ for $s=1$ or $s=-1$  in a way that (i) the norm of the two vectors remain unchanged during the gradual changes in the vectors (ii) $\sum_{i=2}^t\lambda_ic_id_i$ is monotonically non-decreasing on the continious path.
\end{lemma}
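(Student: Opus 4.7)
The plan is to prove the lemma by induction on $t$, reducing to a two-dimensional base case and then sequentially eliminating coordinates.

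\emph{Base case, $t = 3$.} I would write $(c_2, c_3) = \rho_c(\cos\alpha, \sin\alpha)$ and $(d_2, d_3) = \rho_d(\cos\beta, \sin\beta)$ with $\rho_c = \sqrt{c_2^2 + c_3^2}$ and $\rho_d = \sqrt{d_2^2 + d_3^2}$, so that norm preservation becomes automatic. The objective rewrites as
\[
F(\alpha,\beta) = \rho_c \rho_d\bigl[\lambda_2\cos\alpha\cos\beta + \lambda_3\sin\alpha\sin\beta\bigr] = \frac{\rho_c \rho_d}{2}\bigl[(\lambda_2+\lambda_3)\cos(\alpha-\beta) + (\lambda_2-\lambda_3)\cos(\alpha+\beta)\bigr].
\]
The crucial point is that $\lambda_2 \geq \lambda_3 \geq 0$ makes both coefficients $\lambda_2 \pm \lambda_3$ non-negative. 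Changing variables to $p = \alpha - \beta$ and $q = \alpha + \beta$, the two admissible target types $(\alpha_1,\beta_1) \equiv (0,0)$ and $(\alpha_1,\beta_1) \equiv (\pi,\pi) \pmod{2\pi}$ together correspond exactly to the lattice $(p_1, q_1) \in (2\pi\mathbb{Z})^2$ (since $a\equiv b \pmod 2$ is the compatibility condition for $\pi(a,b)$ to be a valid target, and this is parameterized by integer pairs in $(p,q)$). I would then pick the lattice point $(2\pi m, 2\pi n)$ closest to $(p_0, q_0)$, guaranteeing $|p_0 - 2\pi m| \leq \pi$ and $|q_0 - 2\pi n| \leq \pi$, and interpolate linearly. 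Along this linear path, $\cos p(\tau)$ and $\cos q(\tau)$ are each monotone non-decreasing, because their arguments slide toward a multiple of $2\pi$ while staying within a half-period; hence $F$ is monotone non-decreasing, and the norms are preserved by construction.

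\emph{Inductive step, $t > 3$.} I would apply the base case to the coordinate pair $(2, t)$ only: rotate $(c_2, c_t)$ and $(d_2, d_t)$ simultaneously to drive $c_t$ and $d_t$ to zero, freezing all remaining coordinates. Since only these coordinates change, the variation of $F$ equals the variation of $\lambda_2 c_2 d_2 + \lambda_t c_t d_t$, which is monotone non-decreasing by the base case applied with weights $\lambda_2 \geq \lambda_t \geq 0$. Norms are preserved, and I am left with vectors whose $t$-th coordinate is zero, to which the inductive hypothesis applies with weights $\lambda_2 \geq \lambda_3 \geq \ldots \geq \lambda_{t-1} \geq 0$. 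Telescoping the sign choices across the $t-2$ eliminations, the final common sign $s$ equals the sign chosen at the innermost base-case call, since at each intermediate step the previous sign of the second coordinate is erased by the squaring $(c_2^{\text{new}})^2 = (c_2^{\text{prev}})^2 + c_i^2$ performed in the subsequent merge.

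The main obstacle is the base case, specifically the constraint that a single $s$ must govern both target vectors, so the naive strategy of rotating each vector independently to the positive $2$-axis fails when the two vectors are in "opposition." The $(p,q)$ diagonalization resolves this: it unifies the two target types $(0,0)$ and $(\pi,\pi)$ into a single integer lattice in the diagonalized coordinates, so selecting the nearest lattice point automatically picks the correct $s$ and turns the monotonicity requirement into the transparent fact that each cosine term increases monotonically along a straight segment terminating at its maximum, without any case analysis on the signs of the initial coordinates.
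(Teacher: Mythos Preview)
Your proof is correct and follows essentially the same approach as the paper: the key $2$-dimensional diagonalization $(p,q)=(\alpha-\beta,\alpha+\beta)$ turning the objective into a non-negative combination of $\cos p$ and $\cos q$, followed by an induction that repeatedly invokes this base case. The only cosmetic difference is the order of elimination in the inductive step---the paper first collapses the tail coordinates $3,\ldots,t$ by induction and then merges with coordinate $2$ via the base case, whereas you first merge coordinates $2$ and $t$ via the base case and then recurse on $2,\ldots,t-1$; both schemes are valid since every pairwise merge involves coordinate $2$ (carrying the largest weight $\lambda_2$) against some coordinate with a smaller weight.
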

\begin{proof}
By induction in $t$, and by altering the coordinates 3, ..., t, we can move the two vectors to
$(c_2, s'\sqrt{c_3^2+...+c_t^2}, 0, 0, ..., 0)$ and $(d_2,s'\sqrt{d_3^2+...+d_t^2}, 0, 0, ..., 0)$ for some $s'\in\{-1,1\}$.
Then by considering coordinates 1,2 of these vectors and using the induction hypothesis for vectors of size 2, we can reach the vectors
$(s\sqrt{c_2^2+c_3^2+...+c_t^2}, 0, 0, ..., 0)$ and $(s\sqrt{d_2^2+d_3^2+...+d_t^2}, 0, 0, ..., 0)$ for some $s\in\{-1,1\}$.

To show the inducation basis for $t=2$, consider two vectors
$(c_2, c_3)= (r_1\cos(\phi_1), r_1\sin(\phi_1))$ and
$ (d_2,d_3)=(r_2\cos(\phi_2), r_2\sin(\phi_2))$. We get that
$\sum_{i=2}^3\lambda_ic_id_i=
\frac 12 r_1r_2(\lambda_2+\lambda_3)\cos(\phi_1-\phi_2)+ \frac 12 r_1r_2(\lambda_2-\lambda_3)\cos(\phi_1+\phi_2)$. Note that the coefficients of the cosine terms are non-negative. Varying angles $(\phi_1,\phi_2)$ towards either $(0,0)$ or $(\pi, \pi)$  continuously is equivalent with varying angles $\theta_1=\phi_1+\phi_2$ and $\theta_2=\phi_1-\phi_2$ towards multiples of $2\pi$ continuously. The latter can be done such that both $ \cos(\theta_1)$ and $ \cos(\theta_2)$ vary monotonically increasing: if $\theta_1\in[0,\pi]$, we decrease $\theta_1$ towards zero, and if $\theta_1\in(\pi, 2\pi)$, we increase $\theta_1$ to $2\pi$.
\end{proof}

\section{CMC, Moment generating functions  and Independence}
We saw in Example \ref{example2n} that unlike maximal correlation, CMC can be zero for dependent random variables. However, 
for nondegenerate random variables $X$ and $Y$, the equalities $\rhor(X;Y)=\rhor(-X;Y)=0$ implies independence of $X$ and $Y$  \cite{KimeldorfMaySampson}. 
When $\mathcal{X}$ and $\mathcal{Y}$ are finite subsets of the reals with the natural ordering of real numbers, this can be also observed from the following equation
\begin{align}\max(\rhor(X;Y),\rhor(-X;Y))\geq \max_{s_1, s_2\in \mathbb{R}-\{0\}}\frac{\big|M_{X,Y}(s_1, s_2)-M_X(s_1)M_Y(s_2)\big|}{\sqrt{M_X(2s_1)-M_X(s_1)^2}\sqrt{M_Y(2s_2)-M_Y(s_2)^2}},\label{eqn:total-M}\end{align}
where $M_{X,Y}(s_1,s_2)=\mathbb{E}[e^{s_1X+s_2Y}]$ is the moment generating function of $(X,Y)$. 
If $\rhor(X;Y)=\rhor(-X;Y)=0$, from the above equation we get that $M_{X,Y}(s_1, s_2)=M_X(s_1)M_Y(s_2)$ for all $s_1, s_2$. In other words, the joint moment generating function of $p(x,y)$ is the same as the  joint moment generating function of the product distribution $p(x)p(y)$. Since moment generating function uniquely specifies the distribution of finite random variables, we get that $p(x,y)=p(x)p(y)$ and thus $X$ and $Y$ are independent.

To prove equation \eqref{eqn:total-M}, observe that the function $x\mapsto \text{sgn}(s_1)e^{s_1x}$ is  increasing monotone for any real $s_1\neq 0$, where $\text{sgn}(s)\in\{+1,-1\}$ is the sign of $s$. Thus, 
$$\rhor(X;Y)\geq \max_{s_1, s_2\in \mathbb{R}-\{0\}}\frac{M_{X,Y}(s_1, s_2)-M_X(s_1)M_Y(s_2)}{\sqrt{M_X(2s_1)-M_X(s_1)^2}\sqrt{M_Y(2s_2)-M_Y(s_2)^2}}\text{sgn}(s_1)\text{sgn}(s_2),$$
Next, observe that $M_{-X}(-s_1)=M_X(s_1), M_{-X,Y}(-s_1, s_2)=M_{X,Y}(s_1, s_2)$. Thus, 
$$\rhor(-X;Y)\geq \max_{s_1, s_2\in \mathbb{R}-\{0\}}\frac{M_{X,Y}(s_1, s_2)-M_X(s_1)M_Y(s_2)}{\sqrt{M_X(2s_1)-M_X(s_1)^2}\sqrt{M_Y(2s_2)-M_Y(s_2)^2}}\text{sgn}(-s_1)\text{sgn}(s_2).$$
The above equations imply equation \eqref{eqn:total-M}.

\bibliographystyle{IEEEtran}

\end{document}